\pgfplotsset{compat=1.5.1}
\DeclareSymbolFontAlphabet{\mathbbm}{bbold}
\DeclareSymbolFontAlphabet{\mathbb}{AMSb}
\def\@hspace#1{\begingroup\setlength\dimen@{#1}\hskip\dimen@\endgroup}
\newtheorem{theorem}{Theorem}[section]
\newtheorem{lemma}[theorem]{Lemma}
\newtheorem{corollary}[theorem]{Corollary}
\newcommand{\one}{\mathbb{1}}
\def\yobs{y^{\text{obs}}}
\def\BopIpsdtr{\mathcal{D}}
\def\rhofix{\rho^{\text{fix}}}
\newcommand{\dd}{\mathrm{d}}
\newcommand{\bpm}{\begin{pmatrix}}
\newcommand{\epm}{\end{pmatrix}}
\DeclareMathOperator{\rank}{rank}
\renewcommand{\dim}{\operatorname{dim}}
\DeclareMathOperator{\tr}{tr}
\DeclareMathOperator{\ran}{ran}
\DeclareMathOperator{\argmin}{argmin}
\DeclareMathOperator{\pois}{Pois}
\DeclareMathOperator{\Herm}{Herm}
\newcommand{\setC}{\mathbb{C}}
\newcommand{\setN}{\mathbb{N}}
\newcommand{\setR}{\mathbb{R}}
\newcommand{\calF}{\mathcal{F}}
\newcommand{\calI}{\mathcal{I}}
\newcommand{\calL}{\mathcal{L}}
\newcommand{\calM}{\mathcal{M}}
\newcommand{\calO}{\mathcal{O}}
\newcommand{\calS}{\mathcal{S}}
\definecolor{brickred}{rgb}{0.8, 0.25, 0.33}
\definecolor{bostonuniversityred}{rgb}{0.8, 0.0, 0.0}
\definecolor{cornellred}{rgb}{0.7, 0.11, 0.11}
\definecolor{corn}{rgb}{0.98, 0.93, 0.36}
\definecolor{schoolbusyellow}{rgb}{1.0, 0.85, 0.0}
\definecolor{TUblue}{rgb}{0,102,153}
\colorlet{TUbluelight}{TUblue!30!white}
\author[1]{Florian Oberender}
\affil[1]{Institut f\"ur Numerische und Angewandte Mathematik, Georg-August Universität Göttingen}
\title{On spurious fixed points in iterative maximum likelihood reconstruction for quantum tomography%
\footnote{The first author acknowledges support from DFG, CRC 1456 project 432680300.}}
\begin{document}

\maketitle
\begin{abstract}
\noindent
Maximum likelihood iteration is one of the most commonly used reconstruction algorithms in quantum tomography. The main appeal of the method is that it is easy to implement and that it converges reliably to a physically meaningful density matrix in practice. Contradicting these practical observations, we will show that convergence to a true solution is not guaranteed in general by constructing examples for spurious fixed points. To deal with this newly found problem, we then provide a criterion based on first order optimality conditions to check if the result of the algorithm is indeed the desired solution. Furthermore, we generalize the algorithm and show that it is equivalent to factorized gradient descent.
\\

\noindent

\noindent%
\textbf{Keywords:} quantum tomography, maximum likelihood iteration, fixed points, convex optimization
\end{abstract}

\section{Introduction}\label{sec:introduction}
The goal of quantum tomography is the determination of the state of a quantum system from a sequence of different measurements on copies of the state. It was theoretically proposed in \cite{Vogel:89} and realized experimentally in \cite{Smithey:93}. Since then it has become a standard tool in quantum optics \cite{Leonhardt:97} and was also recently applied to determine the quantum state of free-electrons \cite{Priebe:17}. Besides the measurement procedure, a numerical reconstruction algorithm is needed to compute the density matrix from the measured probabilities. Different methods have been proposed for this, like filtered back projection \cite{Vogel:89}, pattern functions \cite{Leonhardt:96}, positive semi-definite programming \cite{Priebe:17,Strandberg:22} or neural networks \cite{Torlai:18}. Another method that has stood the test of time and is popular with experimentalists \cite{Ourjoumtsev:06,Neergaard:06,Sychev:17,Hacker:19} is maximum likelihood iteration. It was proposed in \cite{Hradil:04} and later popularized and developed further in \cite{Lvovsky:04,Rehacek:07}. The main appeals of the method are its versatility, simplicity and the fact that in practice it converges reliably to a physically meaningful density matrix. 

Although some attempts have been made to prove the convergence of the algorithm \cite{Rehacek:07,Goncalves:14} these results remain incomplete. In fact, we will show that convergence to a true solution cannot be guaranteed in general. To the best of our knowledge this problem has not been observed in practice yet. However, in an experimental setting where other errors are present it could be difficult to distinguish a reasonable but suboptimal result from the true solution.

\subsection{Setting and notation}
For the treatment with convex analysis we use a slightly different notation than the one usually used in the physics context. While in principle our results could  be generalized to an infinite dimensional setting, we only consider finite dimensional problems with a fixed dimension \(N\). First, for a positive real number \(c\) we define the set
\begin{align*}
    \BopIpsdtr_{c}=\{\rho\in\Herm(N):\rho \text{ pos. semi-definite},\,\tr(\rho)=c\}
\end{align*}
where \(\Herm(N)\) is the space of hermitian matrices of size \(N\times N\).
The set of density matrices is then given by \(\BopIpsdtr_{1}\).
We represent a set of \(M\) tomography measurements by a single linear operator \(T:\Herm(N)\rightarrow \setR^{M\times N}\).  Furthermore, we assume that the operator maps valid density matrices to non-negative probabilities \(T\BopIpsdtr_{1}\subset [0,\infty)^{M\times N}\). We denote the measured probabilities by \(\yobs\) and assume that \(\yobs\in[0,\infty)^{M\times N}\). In practice, each row of measurements sums to one, but this is not needed for our analysis. Often the measurements are corrupted by noise such that \(\yobs\notin T\BopIpsdtr_{1}\). We therefore do not simply invert the operator and instead consider an optimization problem restricted to the set \(\BopIpsdtr_{1}\). An example for such an operator, that will be used later, is given by \(T_{2}:\Herm(2)\rightarrow\setR^{3\times 2}\) defined by
\begin{align*}
    (T_{2}\rho)_{m,n}=\tr(\rho\tau_{m,n})
\end{align*}
where the matrices \(\tau_{m,n}\) correspond to the outer products of the two different eigenvectors of the three Pauli matrices and are given by
\begin{align*}
    \tau_{1,1}&=\left(\begin{matrix}
        1 &0 \\
        0&0
    \end{matrix}\right)\quad
    &\tau_{2,1}&=\frac{1}{2}\left(\begin{matrix}
        1 &1 \\
        1&1
    \end{matrix}\right)\quad
    &\tau_{3,1}&=\frac{1}{2}\left(\begin{matrix}
        1 &-i \\
        i&1
    \end{matrix}\right)\\
    \tau_{1,2}&=\left(\begin{matrix}
        0 &0 \\
        0&1
    \end{matrix}\right)\quad
    &\tau_{2,2}&=\frac{1}{2}\left(\begin{matrix}
        1 &-1 \\
        -1&1
    \end{matrix}\right)\quad
    &\tau_{3,2}&=\frac{1}{2}\left(\begin{matrix}
        1 &i \\
        -i&1
    \end{matrix}\right).
\end{align*}
This operator then represents the standard six state tomography \cite{Wootters:89}.

As the name states, the approach of the maximum likelihood algorithm is to maximize the likelihood or alternatively minimize the negative log-likelihood \cite{Lvovsky:04}. In our notation this leads to

\begin{align}\label{prob:disc_kl}
    \underset{\rho\in\BopIpsdtr_{1}}{\argmin}-\ln\calL(\yobs,T\rho)=\underset{\rho\in\BopIpsdtr_{1}}{\argmin}-\sum_{m=1}^{M}\sum_{n=1}^{N}\yobs_{m,n}\ln (T\rho)_{m,n}.
\end{align}

The main challenge for the numerical solution of this problem lies in handling the restriction of the solution space onto the set of positive semi-definite matrices. For this, we define the indicator functionals \(\chi_{c}:\Herm(N)\rightarrow [0,\infty]\)
\begin{align*}
    \chi_{c}(\rho):=\begin{cases}
        0 &\quad \rho\in\BopIpsdtr_{c}\\
        \infty &\quad \text{else.}
    \end{cases}
\end{align*}
We will need the concept of subdifferentials which are a set valued generalization of gradients. For a convex functional \(\calF:X\rightarrow (-\infty,\infty]\) on a locally convex vector space \(X\) they are defined by
\begin{align*}
    \partial\calF(x)=\{x^{*}\in X^{*}:\calF(y)\geq \calF(x)-\langle x^{*},y-x\rangle\,\forall y\in X\}.
\end{align*}
For a functional \(\calF\) defined on \(\Herm(N)\) this becomes
\begin{align*}
    \partial\calF(\rho)=\{\delta\in \Herm(N):\calF(\sigma)\geq \calF(\rho)-\tr(\delta^{*}(\sigma-\rho))\,\forall \sigma\in \Herm(N)\}.
\end{align*}
Finally, to study fixed points we define for \(\rho\in\Herm(N)\) the set 
\begin{align*}
    \calM(\rho):=\{\sigma\in\Herm(N):\rho\sigma=\lambda\rho,\lambda\in\setR\}.
\end{align*}
of matrices which just scale \(\rho\) when multiplied. Note that because we only consider hermitian matrices, it does not matter if one considers left or right multiplications here.
\subsection{The standard maximum likelihood iteration}
With our notation the standard fixed point iteration from \cite{Lvovsky:04} can be written as
\begin{align}\label{al:iter_max_lik_std}
    \rho_{k+1}=\frac{R(\rho_{k})\rho_{k}R(\rho_{k})}{\tr(R(\rho_{k})\rho_{k}R(\rho_{k}))}\quad\text{with }R(\rho):=T^{*}\left(\frac{\yobs}{T\rho}\right)
\end{align}
where the division in the definition of \(R\) has to be carried out element-wise. In this form, it can be seen as a variant of the Richardson-Lucy algorithm from image processing, see e.g. \cite{Natterer:01}, which is modified so that positive semi-definiteness is preserved instead of positivity. The iteration relies on the observation that for operators from quantum tomography usually \(T^{*}\one=MI\) holds. So for \(\rhofix\) such that \(T\rhofix=\yobs\) one then gets \(R(\rhofix)=T^{*}\one=MI\) and it is a fixed point. We stress that the opposite direction does not hold if \(\rhofix\) does not have full rank. Additionally, this motivation assumes that \(T\rhofix=\yobs\) has a solution in the set of density matrices. Surprisingly, we will later see that the strong assumption \(T^{*}\one=MI\) is actually not really needed for the algorithm to work in practice.

In \cite{Rehacek:07} it was discovered, that this original iteration can fail to converge and an adapted version was proposed where \[R_{\epsilon}(\rho):=\epsilon T^{*}\left(\frac{\yobs}{T\rho}\right)+I\] with a changing step size \(\epsilon\). Assuming a suitable choice of step size, convergence was then proven but it was implicitly assumed that the solution computed by the algorithm has full rank. This is a problem since many physically meaningful states are represented by density matrices with low rank. Furthermore, one of the main advantages of the iteration is the incorporation of the positive semi-definiteness constraint. However, if this constraint is actually needed, meaning \(T^{-1}\yobs\notin \BopIpsdtr_{1}\), the solution to the constrained problem almost always does not have full rank. This is caused by the fact that the trace constraint is usually obeyed by the data due to normalization of the probabilities and only the positive semi-definiteness is not fulfilled which leads to a minimum of the constrained problem that lies on the boundary of the cone of positive semi-definite matrices. Finally, because the full rank assumption is made for the computed solution, even if the true solution to the problem does have full rank, the algorithm could get stuck at a fixed point where the iterate does not have full rank. Only cases in which the computed and the true solution have full rank are covered by the previous theory, but then a simple inversion would already produce reasonable results, and using the algorithm would not be necessary. In the following sections, we will provide a counterexample to the convergence and conduct a more thorough analysis of the iteration. Also, we will show that it is a variant of a factorized gradient descent algorithm as observed for standard maximum likelihood iteration in \cite{Shang:17}.
\section{Examples for spurious fixed points}\label{sec:example}
We now construct an example which shows that there are fixed points of the algorithm that are not the desired solutions to the problem. First, we provide a general strategy and then construct an explicit example for the operator \(T_{2}\). For given data \(\yobs\), finding a spurious fixed point \(\rhofix\) analytically is very challenging because of the nonlinear dependence of \(R(\rho)\rho\) on \(\rho\). We therefore use a different approach and start with a matrix \(\rhofix\) which we want to be the fixed point but leave the data to be undetermined. To avoid complications, we additionally assume that \(T^{*}\) is invertible and that \(T\rhofix\) is non-zero everywhere. Then for all \(\sigma\in\calM(\rhofix)\) the data defined by \(y(\sigma):=(T\rhofix)\odot(T^{*})^{-1}\sigma\) satisfies
\begin{align*}
    R(\rhofix,y(\sigma))=T^{*}\left(\frac{y(\sigma)}{T\rhofix}\right)=T^{*}\left(\frac{(T\rhofix)\odot(T^{*})^{-1}\sigma}{T\rhofix}\right)=\sigma\in\calM(\rhofix).
\end{align*}
Note that in the previously analyzed case, one has \(\sigma=\lambda I\) because \(\rhofix\) is assumed to have full rank and then only elements of the form \(\lambda I\) are in \(\calM(\rhofix)\). This then implies \(y(\lambda I)=(T\rhofix)\odot(T^{*})^{-1}(\lambda I)=(T\rhofix)\odot\frac{\lambda}{M}\one=\frac{\lambda}{M}T\rhofix\). So we do not get spurious fixed points. However, if this is not the case spurious fixed points can appear. If one would not impose restriction on the entries in the data, we could simply compute them with the previous formula. But to make sure that they are not the true solution to our problem, but correspond to a potentially different density matrix, we have to enforce \(T^{-1}y(\sigma)\in\BopIpsdtr_{1}\).

We will now do this for a \(2\times 2\)-matrix and the operator \(T_{2}\) from \Cref{sec:introduction}. We choose 
\begin{align*}
    \rhofix=\frac{1}{3}\left(\begin{matrix}
        1&1-i\\
        1+i&2
    \end{matrix}\right)\text{ and }\sigma=3I+3t\left(\begin{matrix}
        2&-1+i\\
        -1-i&1
    \end{matrix}\right).
\end{align*}
A simple calculation shows that indeed \(\rhofix\sigma=3\rhofix\) so \(\sigma\in\calM(\rhofix)\). Furthermore, we get
\begin{align*}
    T_{2}\rhofix=\frac{1}{6}\left(\begin{matrix}
        2&4\\
        5&1\\
        5&1
    \end{matrix}\right)\text{ and }y(\sigma)=\frac{1}{6}\left(\begin{matrix}
        2+4t&4-4t\\
        5-5t&1+5t\\
        5-5t&1+5t
    \end{matrix}\right).
\end{align*}
We can then compute
\begin{align*}
    T_{2}^{-1}y(\sigma)=\frac{1}{6}\left(\begin{matrix}
        2+4t&(2-5t)(1-i)\\
        (2-5t)(1+i)&4-4t
        \end{matrix}\right)
\end{align*}
which is positive semi-definite if \(t\in[0,\frac{8}{11}]\). We only have \(T_{2}^{-1}y(\sigma)=\rhofix\) for \(t=0\) which means that for all other \(t\) we have found data \(y(\sigma)\) such that \(\rhofix\) is a fixed point, but not the solution to problem \ref{prob:disc_kl}. Note that this is not caused by any missing data and it is an intrinsic problem with the fixed point iteration. In addition to this, for \(t\in(0,\frac{8}{11})\) the true solution has full rank, which shows that the algorithm can get stuck even if the data corresponds to a valid full rank density matrix. These problems also persist if we do the more complicated iteration with a variable step size as this does not change the fixed point set.

\section{The Gradient Multiplication Algorithm}\label{sec:algorithm_general}
We now generalize the algorithm and provide a theorem based on first order optimality conditions that helps to distinguish spurious fixed points from actual solutions. For the generalization, we consider problems of the form 
\begin{align}\label{prob:general}
    \underset{\rho\in\BopIpsdtr_{c}}{\argmin}\,\calF(\rho)
\end{align}
where \(\calF:\BopIpsdtr_{c}\rightarrow \setR\) is a convex and continuously differentiable functional. This problem structure is present for example when the negative log-likelihood or the \(L^{2}\)-norm is used to assess the fit to the data or when additional convex penalty terms are added for regularization purposes as it is done in the SQUIRRELS algorithm \cite{Priebe:17}. The generalization to other fixed trace values also allows the algorithm to be applied in situations outside of quantum tomography e.g. for the determination of covariance matrices.

The generalized algorithm, which we call Gradient Multiplication algorithm because of its structure, is again given by a simple fixed point iteration
\begin{align}\label{al:iter_RI_general}
    \rho_{k+1}=c\frac{(I-\epsilon\nabla\calF(\rho_{k}))\rho_{k}(I-\epsilon\nabla\calF(\rho_{k}))}{\tr((I-\epsilon\nabla\calF(\rho_{k}))\rho_{k}(I-\epsilon\nabla\calF(\rho_{k})))}\quad\text{ with }\rho_{0}\in\BopIpsdtr_{c}.
\end{align}
For the setting where \(\calF(\rho)=-\ln\calL(\yobs,T\rho)\) this is exactly the same algorithm as the maximum likelihood iteration with variable step size from \cite{Rehacek:07}.

\subsection{Convergence properties}
The main difference in our analysis of the algorithm is that the assumption \(T^{*}\one=MI\) is no longer needed. Instead the algorithm generally relies on the fact that the subgradient of the indicator functional \(\chi_{c}\) and \(\calM\) are very similar and in fact the same for \(\rho\) with full rank. To see this, we first characterize the set \(\calM\).
\begin{lemma}
    For \(\rho\in\BopIpsdtr_{c}\)
    \begin{align*}
        \calM(\rho)=\{\lambda I-Q:\lambda \in \setR, Q\in\Herm(N), \ran Q\subseteq\ker \rho\} .
    \end{align*}
\end{lemma}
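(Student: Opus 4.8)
The plan is to show that the two sides of the claimed identity describe the same set by means of an explicit change of variables between the two parametrisations. Unwinding the definition, $\sigma\in\calM(\rho)$ means that there is a real scalar $\lambda$ with $\rho\sigma=\lambda\rho$, equivalently $\rho(\lambda I-\sigma)=0$. The natural move is therefore to set $Q:=\lambda I-\sigma$, so that $\sigma=\lambda I-Q$, and to reinterpret the matrix equation $\rho Q=0$ as a statement about ranges and kernels.

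First I would prove the inclusion ``$\subseteq$''. Given $\sigma\in\calM(\rho)$ with associated $\lambda\in\setR$, put $Q:=\lambda I-\sigma$. Since $\sigma$ is hermitian and $\lambda$ is real, $Q\in\Herm(N)$, and $\rho Q=\lambda\rho-\rho\sigma=0$. The equality $\rho Q=0$ says precisely that $\rho(Qv)=0$ for every $v$, i.e.\ $Qv\in\ker\rho$ for every $v$, which is $\ran Q\subseteq\ker\rho$; hence $\sigma=\lambda I-Q$ lies in the right-hand side. For the reverse inclusion ``$\supseteq$'', start from $\sigma=\lambda I-Q$ with $\lambda\in\setR$, $Q\in\Herm(N)$ and $\ran Q\subseteq\ker\rho$. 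Then $\sigma$ is hermitian, and $\ran Q\subseteq\ker\rho$ again translates into $\rho Q=0$, so $\rho\sigma=\rho(\lambda I-Q)=\lambda\rho$; thus $\sigma\in\calM(\rho)$.

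There is essentially no serious obstacle here: the lemma is a bookkeeping identity, and the only points that need a moment's care are cosmetic. One should note that $\ker\rho$ is taken in $\setC^{N}$ (equivalently $\ker\rho=(\ran\rho)^{\perp}$ since $\rho$ is hermitian), so that ``$\rho Q=0$'' and ``$\ran Q\subseteq\ker\rho$'' really are the same condition; and, as already remarked after the definition of $\calM$, the asymmetry between left and right multiplication in $\rho\sigma=\lambda\rho$ is illusory, since taking adjoints turns it into $\sigma\rho=\lambda\rho$. The hypothesis $\rho\in\BopIpsdtr_{c}$ is not exploited beyond guaranteeing $\rho\neq0$ (which makes the scalar $\lambda$ attached to a given $\sigma$ unique); positive semi-definiteness per se plays no role in the argument, but it is the relevant case for comparing $\calM(\rho)$ with $\partial\chi_{c}(\rho)$ in the convergence analysis that follows.
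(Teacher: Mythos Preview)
Your proof is correct and in fact more direct than the paper's. Both arguments handle the inclusion ``$\supseteq$'' by a routine verification. For ``$\subseteq$'', the paper introduces the orthogonal projection $P$ onto $\ker\rho$, notes that $\rho+P$ is invertible with $(\rho+P)^{-1}\rho=I-P$, and from $\rho\sigma=\lambda\rho$ derives $(I-P)\sigma=\lambda(I-P)$, which after rearranging yields $\sigma=\lambda I-Q$ with $\ran Q\subseteq\ran P=\ker\rho$. You instead set $Q:=\lambda I-\sigma$ from the outset and read off $\ran Q\subseteq\ker\rho$ immediately from the matrix identity $\rho Q=0$. Your route avoids the auxiliary inversion entirely and makes hermiticity of $Q$ transparent (it is a real shift of a hermitian matrix); the paper's detour through $P$ is not needed for this lemma, although the projection viewpoint does resurface naturally in the subsequent characterisation of $\partial\chi_{c}(\rho)$.
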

\begin{proof}
    That all matrices in the set just scale \(\rho\) follows from a simple computation. For the opposite direction assume that \(\sigma\in\calM(\rho)\) such that \(\rho\sigma=\lambda\rho\) and let \(P\) be the projection onto the kernel of \(\rho\). Then \(\rho+P\) is invertible and \((\rho+P)^{-1}\rho=I-P\). This leads to
    \begin{align*}
        (\rho+P)^{-1}\rho\sigma&=(\rho+P)^{-1}\lambda\rho\\
        (I-P)\sigma&=\lambda(I-P)\\
        \sigma&=\lambda I-P(\lambda I+\sigma)
    \end{align*}
    and \(\ran P(\lambda I+\sigma)\subseteq\ker \rho\) which finishes the proof.
\end{proof}
For the subgradient of \(\chi\) we get the following set.
\begin{lemma}
    For \(\rho\in\BopIpsdtr_{c}\)
    \begin{align*}
        \partial\chi_{c}(\rho)=\{\lambda I-Q:\lambda \in \setR, Q\in\Herm(N),\ran Q\subseteq\ker \rho,Q \text{ p.s.d.}\} .
    \end{align*}
\end{lemma}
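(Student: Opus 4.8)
The plan is to recognise that $\partial\chi_{c}(\rho)$ is exactly the normal cone of $\BopIpsdtr_{c}$ at $\rho$ and to determine it by probing with suitable feasible perturbations. Since $\chi_{c}$ vanishes on $\BopIpsdtr_{c}$ and equals $+\infty$ elsewhere, the inequality defining the subdifferential is automatic for $\sigma\notin\BopIpsdtr_{c}$ and reduces, for $\sigma\in\BopIpsdtr_{c}$, to a linear one; so $\partial\chi_{c}(\rho)=\{\delta\in\Herm(N):\tr(\delta(\sigma-\rho))\le 0\text{ for all }\sigma\in\BopIpsdtr_{c}\}$. Throughout, fix the spectral decomposition of $\rho$, let $P^{\perp}$ be the orthogonal projection onto $\ran\rho$ and $P=I-P^{\perp}$ the one onto $\ker\rho$ (orthogonal complements, since $\rho$ is Hermitian), and note $\rho\neq 0$ because $\tr\rho=c>0$, so $\ran\rho\neq\{0\}$. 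The inclusion "$\supseteq$" is immediate: if $\delta=\lambda I-Q$ with $Q$ positive semi-definite and $\ran Q\subseteq\ker\rho$, then $\rho Q=0$, hence $\tr(Q\rho)=0$, and for any $\sigma\in\BopIpsdtr_{c}$ we get $\tr(\delta(\sigma-\rho))=\lambda\tr(\sigma-\rho)-\tr(Q\sigma)+\tr(Q\rho)=-\tr(Q\sigma)\le 0$, using $\tr(\sigma-\rho)=c-c=0$ and that the trace of a product of two positive semi-definite matrices is nonnegative.

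For "$\subseteq$", take $\delta\in\partial\chi_{c}(\rho)$ and argue in three steps. First, for every Hermitian $H$ with $\ran H\subseteq\ran\rho$ and $\tr H=0$ the matrix $\rho+tH$ belongs to $\BopIpsdtr_{c}$ for all sufficiently small $|t|$ (its trace is $c$, and it stays positive semi-definite because $\rho$ is positive definite on its range); feasibility of both signs of $t$ forces $\tr(\delta H)=0$ (since $t\,\tr(\delta H)\le 0$ must hold for small $t$ of either sign). As the trace-free Hermitian matrices supported on $\ran\rho$ are precisely the orthogonal complement of $\setR P^{\perp}$ inside the space of Hermitian matrices supported on $\ran\rho$ (with respect to the trace inner product), this yields $P^{\perp}\delta P^{\perp}=\lambda P^{\perp}$ for some $\lambda\in\setR$. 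Since $\tr((\delta-\lambda I)(\sigma-\rho))=\tr(\delta(\sigma-\rho))$, the matrix $\delta':=\delta-\lambda I$ again lies in $\partial\chi_{c}(\rho)$ and satisfies $P^{\perp}\delta'P^{\perp}=0$, so $v^{*}\delta'v=0$ for every $v\in\ran\rho$. Second, fix a unit vector $v\in\ran\rho$; for any unit vector $u$ the matrix $\sigma_{s}:=\rho-s\,vv^{*}+s\,uu^{*}$ has trace $c$ and, for $s$ ranging over $[0,s_{0}]$ with $s_{0}$ below the smallest positive eigenvalue of $\rho$, is positive semi-definite (since $vv^{*}\preceq P^{\perp}$ gives $\rho-s\,vv^{*}\succeq\rho-sP^{\perp}\succeq 0$, and then a positive semi-definite term is added), hence lies in $\BopIpsdtr_{c}$; therefore $0\ge\tr(\delta'(\sigma_{s}-\rho))=s\,(u^{*}\delta'u-v^{*}\delta'v)=s\,u^{*}\delta'u$, so $u^{*}\delta'u\le 0$ for every $u$, i.e. $Q:=-\delta'$ is positive semi-definite. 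Third, $Q\succeq 0$ together with $P^{\perp}QP^{\perp}=0$ gives $\|Q^{1/2}P^{\perp}x\|^{2}=x^{*}P^{\perp}QP^{\perp}x=0$ for all $x$, hence $Q^{1/2}P^{\perp}=0$ and $QP^{\perp}=0$, i.e. $\ran\rho\subseteq\ker Q$; since $Q$ is Hermitian, $\ran Q=(\ker Q)^{\perp}\subseteq(\ran\rho)^{\perp}=\ker\rho$. Thus $\delta=\lambda I+\delta'=\lambda I-Q$ with the asserted properties.

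The step I expect to be the crux is the bookkeeping in "$\subseteq$": one has to separate the test directions whose two orientations both stay in $\BopIpsdtr_{c}$ (these force equalities and produce the $\lambda I$ part) from those that are only one-sidedly feasible (these force the inequality $Q\succeq 0$), and then combine the scalar inequalities $u^{*}\delta'u\le 0$ with $P^{\perp}QP^{\perp}=0$ to get the range condition $\ran Q\subseteq\ker\rho$. A shorter but less self-contained route is the normal-cone calculus: $\partial\chi_{c}(\rho)$ is the normal cone at $\rho$ of $\calP\cap\{\tr=c\}$, where $\calP$ denotes the positive semi-definite cone; since the relative interior of $\calP$ meets the hyperplane $\{\tr=c\}$, say at $\tfrac{c}{N}I$, this normal cone equals the sum of the normal cone of $\calP$ at $\rho$, namely $\{-Q:Q\succeq 0,\ Q\rho=0\}$, and that of the hyperplane, namely $\setR I$, which is precisely the claimed set.
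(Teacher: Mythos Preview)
Your proof is correct and takes a genuinely different route from the paper's. The paper diagonalises the candidate subgradient $M=VDV^{*}$ and works in that eigenbasis: by shifting mass between two diagonal entries of $\tilde\rho=V^{*}\rho V$ it shows $D_{j,j}$ is constant on $\{j:\tilde\rho_{j,j}\neq 0\}$ (this produces the $\lambda I$ part), then tests against the rank-one matrices $c\,e_{j}e_{j}^{*}$ for $j$ with $\tilde\rho_{j,j}=0$ to obtain the sign condition $q_{j}\ge 0$, and finally argues the range condition entrywise from positive semi-definiteness of $\tilde\rho$. You instead work relative to the spectral projections of $\rho$ and probe with explicit feasible perturbations: two-sided directions $\rho+tH$ supported on $\ran\rho$ force the equality $P^{\perp}\delta P^{\perp}=\lambda P^{\perp}$, one-sided directions $\rho-s\,vv^{*}+s\,uu^{*}$ force $Q:=\lambda I-\delta\succeq 0$, and the range condition follows cleanly from $Q^{1/2}P^{\perp}=0$. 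Your organisation makes the geometric origin of the two pieces (the $\lambda I$ from the tangent space of the affine constraint, the $-Q$ from the boundary of the cone) more transparent and also yields the normal-cone calculus shortcut you mention at the end; the paper's argument, by contrast, stays with coordinate inequalities throughout and never needs projections or square roots, which makes it slightly more elementary but somewhat obscures why the decomposition $\lambda I-Q$ is the natural one.
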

\begin{proof}
    First, we show that every matrix of this form is in the subgradient. For this we take \(\sigma\in\BopIpsdtr_{c}\) and compute
    \begin{align*}
    \chi_{c}(\rho)+\tr((\lambda I-Q)(\sigma-\rho))&=0+\lambda(\tr(\sigma)-\tr(\rho))-\tr(Q\sigma)+\tr(Q\rho)\\
    &=-\tr(Q\sigma)=-\tr(Q^{\frac{1}{2}}\sigma^{\frac{1}{2}}\sigma^{\frac{1}{2}}Q^{\frac{1}{2}})\leq 0=\chi_{c}(\sigma).
\end{align*}
For the other direction assume that \(M\in\partial\chi_{c}(\rho)\). This then implies \(\tr(M\sigma)\leq\tr(M\rho)\) for all \(\sigma\in\BopIpsdtr\). Because \(M\in\Herm(N)\) we can diagonalize it and write \(M=:VDV^{*}\) which implies
\[\tr(D\sigma)\leq\tr(VDV^{*}\rho)=\tr(DV^{*}\rho V)=:\tr(D\tilde{\rho})\quad\forall\sigma\in\BopIpsdtr_{c}.\]
Now for \(j,k\) such that \(\tilde{\rho}_{j,j},\tilde{\rho}_{k,k}\neq 0\) we have
\begin{align*}
    \tr(D\tilde{\rho})=\sum_{i\neq j,k}D_{i,i}\tilde{\rho}_{i,i}+D_{j,j}\tilde{\rho}_{j,j}+D_{k,k}\tilde{\rho}_{k,k}<\begin{cases}
        \sum_{i\neq j,k}D_{i,i}\tilde{\rho}_{i,i}+D_{j,j}(\tilde{\rho}_{j,j}+\tilde{\rho}_{k,k})\quad \text{for } D_{j,j}>D_{k,k}\\
        \sum_{i\neq j,k}D_{i,i}\tilde{\rho}_{i,i}+D_{k,k}(\tilde{\rho}_{j,j}+\tilde{\rho}_{k,k})\quad \text{for } D_{k,k}>D_{j,j}
    \end{cases}
\end{align*}
and this implies that \(D_{j,j}\) is constant for all \(i\in\calI:=\{j:\tilde{\rho}_{j,j}\neq 0\}\).  We can then write \(D=:\lambda I-D_{q}\) with
\begin{align*}
    (D_{q})_{j,j}=\begin{cases}
        0\quad&\text{for }{j\in\calI}\\
        q_{j}\in\setR\quad&\text{for }{j\notin\calI}.
    \end{cases}
\end{align*}
Then by taking \(\sigma\) to be the matrix \(c\cdot e_{j}e_{j}^{*}\) for \(j\notin\calI\) we get
\begin{align*}
    \tr(D(c\cdot e_{j}e_{j}^{*}))\leq\tr(D\tilde{\rho})\Leftrightarrow
    \lambda c\tr(e_{j}e_{j}^{*})-q_{j}\leq \lambda\tr(\tilde{\rho})\Leftrightarrow q_{j}\geq 0
\end{align*}
So \(M=\lambda I-VD_{q}V^{*}\) and \(VD_{q}V^{*}\) is positive semi-definite. It remains to show that its range is in the kernel of \(\rho\). By positive semi-definiteness if \(\tilde{\rho}_{j,j}=0\) so is \(\tilde{\rho}_{k,j}\) which means \(0=\tilde{\rho}e_{j}=V^{*}\rho Ve_{j}\) and \(Ve_{j}\in\ker \rho\) for \(j\notin\calI\). Now because \(q_{j}=0\) for \(j\in\calI\) we get \(\ran VD_{q}V^{*}\subseteq \ker \rho\).
\end{proof}
We now reformulate problem (\ref{prob:general}) and by the convexity and the optimality conditions get
\begin{align*}
    \rho^{\dagger}&\in\underset{\rho\in\Herm(N)}{\argmin}\calF(\rho)+\chi_{c}(\rho)\\
    \Leftrightarrow\quad 0&\in \partial\calF(\rho^{\dagger})+\partial\chi_{c}(\rho^{\dagger})\\
    \Leftrightarrow\quad -\nabla\calF(\rho^{\dagger})&\in \partial\chi_{c}(\rho^{\dagger}).
\end{align*}
A first relationship between the iteration and the solution to problem \ref{prob:general} is given by the following lemma.
\begin{lemma}
    Let \(\rho^{\dagger}\) be a solution to problem (\ref{prob:general}), then it is a fixed point of iteration (\ref{al:iter_RI_general}) for all \(\epsilon\in\setR\setminus\{\mu\}\) for at most a single value \(\mu\), which is given by \(c\left(\tr(\nabla\calF(\rho^{\dagger})\rho^{\dagger})\right)^{-1}\).
\end{lemma}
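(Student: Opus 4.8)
The plan is to exploit the first–order optimality characterization already derived above: if $\rho^{\dagger}$ solves problem (\ref{prob:general}) then $-\nabla\calF(\rho^{\dagger})\in\partial\chi_{c}(\rho^{\dagger})$, and by the preceding characterization of the subdifferential of $\chi_{c}$ this means we may write
\begin{align*}
    -\nabla\calF(\rho^{\dagger})=\lambda I-Q,\qquad \lambda\in\setR,\ Q\in\Herm(N)\text{ p.s.d.},\ \ran Q\subseteq\ker\rho^{\dagger}.
\end{align*}
First I would record the immediate consequences $\rho^{\dagger}Q=0$ and, by hermiticity of $Q$ and $\rho^{\dagger}$, also $Q\rho^{\dagger}=(\rho^{\dagger}Q)^{*}=0$.

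With this decomposition the perturbation entering the iteration is $I-\epsilon\nabla\calF(\rho^{\dagger})=(1+\epsilon\lambda)I-\epsilon Q$, and since $Q$ annihilates $\rho^{\dagger}$ from both sides one obtains $(I-\epsilon\nabla\calF(\rho^{\dagger}))\rho^{\dagger}=(1+\epsilon\lambda)\rho^{\dagger}$ and likewise $\rho^{\dagger}(I-\epsilon\nabla\calF(\rho^{\dagger}))=(1+\epsilon\lambda)\rho^{\dagger}$, hence
\begin{align*}
    (I-\epsilon\nabla\calF(\rho^{\dagger}))\rho^{\dagger}(I-\epsilon\nabla\calF(\rho^{\dagger}))=(1+\epsilon\lambda)^{2}\rho^{\dagger},
\end{align*}
whose trace equals $(1+\epsilon\lambda)^{2}c$. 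Substituting into (\ref{al:iter_RI_general}) shows that the right-hand side equals $\rho^{\dagger}$ precisely when $(1+\epsilon\lambda)^{2}\neq 0$; in particular the denominator is nonzero exactly in that regime, so the iteration is well defined there and fixes $\rho^{\dagger}$ for every $\epsilon$ except possibly the single value $\epsilon=-1/\lambda$ (and for every $\epsilon$ without exception when $\lambda=0$, which is consistent with the claimed ``at most a single value'').

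It then remains to identify $\lambda$. Multiplying the decomposition by $\rho^{\dagger}$ and taking the trace, using $\tr(Q\rho^{\dagger})=0$ and $\tr(\rho^{\dagger})=c$, yields $-\tr(\nabla\calF(\rho^{\dagger})\rho^{\dagger})=\lambda c$, so the exceptional value is
\begin{align*}
    \mu=-\tfrac{1}{\lambda}=c\bigl(\tr(\nabla\calF(\rho^{\dagger})\rho^{\dagger})\bigr)^{-1},
\end{align*}
as claimed (with the understanding that there is no exceptional value when $\tr(\nabla\calF(\rho^{\dagger})\rho^{\dagger})=0$). There is no serious obstacle here: once the subdifferential characterization is invoked the computation is essentially forced, and the only points demanding a little care are verifying that the denominator cannot vanish for any other reason and handling the degenerate case $\lambda=0$ separately.
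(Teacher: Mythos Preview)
Your proof is correct and essentially the same as the paper's. The only cosmetic difference is that the paper routes through the inclusion $\partial\chi_{c}(\rho^{\dagger})\subseteq\calM(\rho^{\dagger})$ to obtain $-\nabla\calF(\rho^{\dagger})\rho^{\dagger}=\lambda\rho^{\dagger}$ directly, whereas you unpack the explicit decomposition $-\nabla\calF(\rho^{\dagger})=\lambda I-Q$ from the subdifferential lemma; the ensuing computation and the identification of $\lambda$ via the trace are identical.
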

\begin{proof}
    For such a solution \(\rho^{\dagger}\), we know that \(-\nabla\calF(\rho^{\dagger})\in\partial\chi_{c}(\rho^{\dagger})\) which is a subset of \(\calM(\rho^{\dagger})\). Therefore there exists some \(\lambda \in \setR\) such that \(-\nabla\calF(\rho^{\dagger})\rho^{\dagger}=\lambda\rho^{\dagger}\) and this implies
    \begin{align*}
        (I-\epsilon\nabla\calF(\rho^{\dagger}))\rho^{\dagger}(I-\epsilon\nabla\calF(\rho^{\dagger}))=\rho^{\dagger}+2\lambda\epsilon\rho^{\dagger}+\epsilon^{2}\lambda^{2}\rho^{\dagger}=(1+\epsilon\lambda)^{2}\rho^{\dagger}.
    \end{align*}
    So \(\rho^{\dagger}\) is a fixed point for \(\epsilon\neq-\lambda^{-1}\). To calculate \(\lambda\) we use \(\lambda=c^{-1}\tr(\lambda\rho^{\dagger})=c^{-1}\tr(-\nabla\calF(\rho^{\dagger})\rho^{\dagger})\).
\end{proof}
We also show that for \(\epsilon\) chosen small enough the iteration decreases the objective function.
\begin{lemma}
    Let \(\rho_{k}\in\BopIpsdtr_{c}\). If \(\rho_{k}\) is not a fixed point of (\ref{al:iter_RI_general}), then for all \(\epsilon\) small enough
    \[\calF(\rho_{k+1})<\calF(\rho_{k}).\]
\end{lemma}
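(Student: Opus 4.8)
The plan is to linearise the iteration in the step size $\epsilon$, feed this into the first-order Taylor expansion of $\calF$ at $\rho_k$, and read off the sign of the leading term using a Cauchy--Schwarz estimate.

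I would start by writing $G:=\nabla\calF(\rho_k)$ and expanding (\ref{al:iter_RI_general}): since $(I-\epsilon G)\rho_k(I-\epsilon G)=\rho_k-\epsilon(G\rho_k+\rho_k G)+\epsilon^2 G\rho_k G$ and its trace is $c-2\epsilon\tr(G\rho_k)+\epsilon^2\tr(G^2\rho_k)$, dividing and keeping first-order terms gives
\begin{align*}
\rho_{k+1}=\rho_k+\epsilon D+O(\epsilon^2),\qquad D:=\frac{2\tr(G\rho_k)}{c}\,\rho_k-G\rho_k-\rho_k G,
\end{align*}
with $\tr D=0$ (as it must be, since $\tr\rho_{k+1}=c$) and $\|\rho_{k+1}-\rho_k\|=O(\epsilon)$. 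Continuous differentiability of $\calF$ then gives $\calF(\rho_{k+1})-\calF(\rho_k)=\tr\big(G(\rho_{k+1}-\rho_k)\big)+o(\|\rho_{k+1}-\rho_k\|)=\epsilon\,\tr(GD)+o(\epsilon)$, and a short computation using cyclicity of the trace (the $G\rho_k G$ and $G^2\rho_k$ terms coincide) yields
\begin{align*}
\tr(GD)=\frac{2(\tr G\rho_k)^2}{c}-2\tr(G^2\rho_k)=-\frac{2}{c}\Big(c\,\tr(G^2\rho_k)-(\tr G\rho_k)^2\Big).
\end{align*}

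The next step is to show $\tr(GD)<0$. Nonpositivity is Cauchy--Schwarz in the Hilbert--Schmidt inner product: $\tr(G\rho_k)=\langle\rho_k^{1/2},\,\rho_k^{1/2}G\rangle$, $\tr(G^2\rho_k)=\|\rho_k^{1/2}G\|^2$ and $c=\|\rho_k^{1/2}\|^2$, so $(\tr G\rho_k)^2\le c\,\tr(G^2\rho_k)$ and hence $\tr(GD)\le0$. Equality would force $\rho_k^{1/2}G=\mu\rho_k^{1/2}$ for some scalar, hence $G\rho_k=\mu\rho_k$ with $\mu\in\setR$; but then $(I-\epsilon G)\rho_k(I-\epsilon G)=(1-\epsilon\mu)^2\rho_k$, so $\rho_{k+1}=\rho_k$ whenever $\epsilon\mu\ne1$, i.e.\ $\rho_k$ is a fixed point of (\ref{al:iter_RI_general}) for every sufficiently small $\epsilon$, contradicting the hypothesis. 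Therefore $\tr(GD)$ is a strictly negative constant not depending on $\epsilon$, and $\calF(\rho_{k+1})-\calF(\rho_k)=\epsilon\,\tr(GD)+o(\epsilon)<0$ for all $\epsilon>0$ small enough.

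I expect the equality discussion to be the only real point of care: it forces one to read ``$\rho_k$ is not a fixed point'' as ``$\rho_k$ is not a fixed point for the relevant (small) step sizes'', and its substance is the observation that equality in Cauchy--Schwarz is exactly the condition $\nabla\calF(\rho_k)\rho_k\in\setR\rho_k$, equivalently $-\nabla\calF(\rho_k)\in\calM(\rho_k)$ --- which, by the same computation as in the proof of the preceding lemma relating solutions to fixed points, makes $\rho_k$ a fixed point for all but one $\epsilon$. Everything else, in particular controlling the $O(\epsilon^2)$ and $o(\epsilon)$ remainders, is routine because the leading coefficient $\tr(GD)$ is independent of $\epsilon$.
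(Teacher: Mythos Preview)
Your proof is correct and follows essentially the same route as the paper: linearise the iteration in $\epsilon$, feed the first-order direction into the Taylor expansion of $\calF$, and bound the leading coefficient via Cauchy--Schwarz applied to $\rho_k^{1/2}$ and $\rho_k^{1/2}\nabla\calF(\rho_k)$, identifying the equality case with $-\nabla\calF(\rho_k)\in\calM(\rho_k)$. Your closing remark about how to read ``not a fixed point'' is apt and matches exactly how the paper uses the hypothesis.
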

\begin{proof}
    We follow and generalize the calculations in \cite{Rehacek:07}. As it is done there we focus on linear terms in \(\epsilon\) and compute
    \begin{align*}
        \rho_{k+1}&=c\frac{\rho_{k}-\epsilon(\nabla\calF(\rho_{k})\rho_{k}+\rho_{k}\nabla\calF(\rho_{k}))+\epsilon^{2}\nabla\calF(\rho_{k})\rho_{k}\nabla\calF(\rho_{k})}{\tr\left(\rho_{k}-\epsilon(\nabla\calF(\rho_{k})\rho_{k}+\rho_{k}\nabla\calF(\rho_{k}))+\epsilon^{2}\nabla\calF(\rho_{k})\rho_{k}\nabla\calF(\rho_{k})\right)}\\
        &=c\frac{\rho_{k}}{\tr(\rho_{k})}+c\epsilon\frac{-(\nabla\calF(\rho_{k})\rho_{k}+\rho_{k}\nabla\calF(\rho_{k}))\tr(\rho_{k})+2\tr(\nabla\calF(\rho_{k})\rho_{k})\rho_{k}}{\tr(\rho_{k})^{2}}+\calO(\epsilon^{2})\\
        &=\rho_{k}+\epsilon\left[-(\nabla\calF(\rho_{k})\rho_{k}+\rho_{k}\nabla\calF(\rho_{k}))+\frac{2}{c}\tr(\nabla\calF(\rho_{k})\rho_{k})\rho_{k}\right]+\calO(\epsilon^{2}).
    \end{align*}
    Then for \(\sigma\) such that \(\rho_{k}+\sigma\) is positive semidefinite, we have that \(\rho_{k}+\epsilon\sigma\) is positive semidefinite for all \(\epsilon\leq1\) and
    \begin{align*}
        \calF(\rho_{k}+\epsilon\sigma)&=\calF(\rho_{k})+\epsilon\tr(\nabla\calF(\rho_{k})\sigma)+\calO(\epsilon^{2})
    \end{align*}
    Combining both equalities we get
    \begin{align*}
        \calF(\rho_{k})&-\calF(\rho_{k+1})\\
        &=-\epsilon\tr\left(\nabla\calF(\rho_{k})\left[-(\nabla\calF(\rho_{k})\rho_{k}+\rho_{k}\nabla\calF(\rho_{k}))+\frac{2}{c}\tr(\nabla\calF(\rho_{k})\rho_{k})\rho_{k}\right]\right)+\calO(\epsilon^{2})\\        
        &=\frac{2}{c}\epsilon(c\tr(\nabla\calF(\rho_{k})^{2}\rho_{k})-\tr(\nabla\calF(\rho_{k})\rho_{k})^{2})+\calO(\epsilon^{2}).
    \end{align*}
    As a last step we use the Cauchy-Schwarz inequality to get
    \[c\tr(\nabla\calF(\rho_{k})^{2}\rho_{k})=\tr(\nabla\calF(\rho_{k})^{2}\rho_{k})\tr(\rho_{k})\geq \tr(\nabla\calF(\rho_{k})\rho_{k})^{2}\]
    where equality holds only if \(\nabla\calF(\rho_{k})\rho_{k}^{\frac{1}{2}}\) is a scalar multiple of \(\rho_{k}^{\frac{1}{2}}\) which is equivalent to \(-\nabla\calF(\rho_{k})\in\calM(\rho_{k})\). So if this is not the case we can choose \(\epsilon\) small enough such that the linear part dominates and the claimed inequality holds strictly.
\end{proof}
Note that the proof of this lemma does not rely on the convexity of \(\calF\). So the decay of the functional during the iteration can also be shown for non-convex functionals.

Using the continuity of \(\nabla\calF\), we can now use the same argument as in \cite{Rehacek:07} to show that choosing \(\epsilon\) at each step such that it maximizes the reduction of \(\calF\) guarantees convergence to a fixed point. For such a fixed point we can then check if it is the global minimum.
\begin{theorem}\label{the:fix_opt_cond_general}
    Assuming that \(\epsilon\) is chosen at each step such that the reduction of \(\calF\) is maximized, the iteration (\ref{al:iter_RI_general}) converges to a fixed point \(\rho^{\dagger}\). This is a solution to problem (\ref{prob:general}) if and only if it satisfies
    \begin{align*}
        \nabla\calF(\rho^{\dagger})-\tr(\nabla\calF(\rho^{\dagger})\rho^{\dagger})I \text{ is positive semi-definite.}
    \end{align*}
\end{theorem}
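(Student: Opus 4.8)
The statement splits into two essentially independent halves: (a) convergence of the iteration (\ref{al:iter_RI_general}) to a fixed point $\rho^{\dagger}$, and (b) the positive-semidefiniteness criterion deciding whether this $\rho^{\dagger}$ minimises $\calF$ over $\BopIpsdtr_{c}$. Part (a) is not the novel content, and I would handle it exactly as in \cite{Rehacek:07}: choosing $\epsilon$ to maximise the reduction makes $\calF(\rho_{k})$ non-increasing, and it is bounded below since $\calF$ is continuous on the compact set $\BopIpsdtr_{c}$, so $\calF(\rho_{k})-\calF(\rho_{k+1})\to 0$. Passing to a subsequence $\rho_{k_{j}}\to\rho^{\dagger}\in\BopIpsdtr_{c}$, if the Cauchy--Schwarz gap $c\,\tr(\nabla\calF(\rho^{\dagger})^{2}\rho^{\dagger})-\tr(\nabla\calF(\rho^{\dagger})\rho^{\dagger})^{2}$ were strictly positive, then by continuity of $\nabla\calF$ and the quantitative descent estimate already proved, a single small $\epsilon$ would force a reduction bounded away from $0$ on a neighbourhood of $\rho^{\dagger}$, contradicting $\calF(\rho_{k})-\calF(\rho_{k+1})\to 0$. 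Hence the gap vanishes, which by the equality analysis in the descent lemma means $-\nabla\calF(\rho^{\dagger})\in\calM(\rho^{\dagger})$, and then the computation from the lemma linking solutions to fixed points shows $\rho^{\dagger}$ is a genuine fixed point; upgrading subsequential convergence to convergence of the whole sequence is, once more, the argument of \cite{Rehacek:07}.

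For part (b), start from the first-order condition recorded just above the theorem: $\rho^{\dagger}$ solves (\ref{prob:general}) if and only if $-\nabla\calF(\rho^{\dagger})\in\partial\chi_{c}(\rho^{\dagger})$, and by the lemma computing the subdifferential this set is $\{\lambda I-Q:\lambda\in\setR,\ Q\in\Herm(N),\ \ran Q\subseteq\ker\rho^{\dagger},\ Q\succeq 0\}$. On the other hand, part (a) already gives $-\nabla\calF(\rho^{\dagger})\in\calM(\rho^{\dagger})$, so by the lemma characterising $\calM$ we may write $-\nabla\calF(\rho^{\dagger})=\lambda I-Q$ with $\lambda\in\setR$, $Q\in\Herm(N)$, $\ran Q\subseteq\ker\rho^{\dagger}$, but now with no sign constraint on $Q$. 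Thus the desired membership in $\partial\chi_{c}(\rho^{\dagger})$ differs from the one we already possess only by the demand $Q\succeq 0$, and the whole theorem reduces to: (i) this decomposition is unique, and (ii) its $Q$-component is (a normalisation of) the displayed matrix.

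For (i) and (ii): from $-\nabla\calF(\rho^{\dagger})\rho^{\dagger}=\lambda\rho^{\dagger}$ the operator $\nabla\calF(\rho^{\dagger})$ acts as the scalar $-\lambda$ on $\ran\rho^{\dagger}$, and taking the trace yields $\lambda c=-\tr(\nabla\calF(\rho^{\dagger})\rho^{\dagger})$. Working in a basis adapted to the orthogonal splitting $\ran\rho^{\dagger}\oplus\ker\rho^{\dagger}$, the constraint $\ran Q\subseteq\ker\rho^{\dagger}$ forces $Q$ to be block diagonal with vanishing $\ran\rho^{\dagger}$-block, hence $\nabla\calF(\rho^{\dagger})=Q-\lambda I$ is block diagonal too with $\ran\rho^{\dagger}$-block equal to $-\lambda I$; this pins down $\lambda$ and then $Q=\nabla\calF(\rho^{\dagger})+\lambda I$ uniquely, and substituting $\lambda$ gives $Q=\nabla\calF(\rho^{\dagger})-c^{-1}\tr(\nabla\calF(\rho^{\dagger})\rho^{\dagger})\,I$, which is the matrix in the statement when $c=1$. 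Since $Q$ is block diagonal with zero $\ran\rho^{\dagger}$-block, $Q\succeq 0$ as an operator is equivalent to positive semidefiniteness of its $\ker\rho^{\dagger}$-block, i.e. to $\nabla\calF(\rho^{\dagger})+\lambda I\succeq 0$. Combining, $\rho^{\dagger}$ is a solution of (\ref{prob:general}) if and only if this matrix is positive semidefinite, with convexity of $\calF$ entering only through the sufficiency direction of the first-order condition.

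The main obstacle is the matching step in (b): one must be sure that the $Q$ delivered by the fixed-point property (membership in $\calM$, carrying no sign information) is literally the $Q$ that the optimality condition requires to be positive semidefinite — that is, the uniqueness of the decomposition $-\nabla\calF(\rho^{\dagger})=\lambda I-Q$ together with a careful comparison of the two lemmas on $\calM(\rho)$ and $\partial\chi_{c}(\rho)$, whose only discrepancy is exactly that sign condition. The convergence half (a) is more technical but standard and can be cited from \cite{Rehacek:07}.
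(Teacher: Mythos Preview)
Your proof is correct and follows essentially the same line as the paper's: for (a) you invoke the descent lemma plus the compactness/continuity argument from \cite{Rehacek:07}, and for (b) you write $-\nabla\calF(\rho^{\dagger})=\lambda I-Q$ via membership in $\calM(\rho^{\dagger})$, recover $\lambda$ from the trace, and identify the displayed matrix with $Q$, so that the optimality condition $-\nabla\calF(\rho^{\dagger})\in\partial\chi_{c}(\rho^{\dagger})$ reduces exactly to $Q\succeq 0$. Your explicit check of uniqueness of the decomposition (needed for the ``only if'' direction, which the paper leaves implicit) and your remark that the displayed matrix equals $Q$ only for $c=1$---in general one gets $Q=\nabla\calF(\rho^{\dagger})-c^{-1}\tr(\nabla\calF(\rho^{\dagger})\rho^{\dagger})I$---are both welcome clarifications.
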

\begin{proof}
    The convergence follows from the previous lemma together with the argument from the appendix of \cite{Rehacek:07}. For the validity condition assume that \(\rho^{\dagger}\) is a fixed point. Then \(-\nabla\calF(\rho^{\dagger})\in\calM(\rho^{\dagger})\) and \(-\nabla\calF(\rho^{\dagger})=\lambda I-Q\) with \(\ran Q\subset \ker \rho^{\dagger}\). This means
    \begin{align*}
        \tr(-\nabla\calF(\rho^{\dagger})\rho^{\dagger})I+\nabla\calF(\rho^{\dagger})=\tr((\lambda I-Q)\rho^{\dagger})I-\lambda I+Q=\lambda I-\lambda I+Q=Q.
    \end{align*}
    Now if \(Q\) is positive semidefinite, \(-\nabla\calF(\rho^{\dagger})\in\partial\chi_{c}(\rho^{\dagger})\) and it is therefore a solution to problem \ref{prob:general}. In the same way if \(Q\) is not positive semidefinite \(-\nabla\calF(\rho^{\dagger})\notin\partial\chi_{c}(\rho^{\dagger})\) and it is not a solution to the problem.
\end{proof}
In practice it has been recommended  so far to fix a step size at the start and keep track of the decay of the target function \(\calF\) \cite{Rehacek:07}. The step size is then just decreased if the current step size does not lead to a decay of the target function. This also works well in our generalized setup and often the step size is fixed for almost the whole algorithm. 

Based on our results one should additionally check for the fixed point if the condition for \Cref{the:fix_opt_cond_general} is satisfied. There are some numerical challenges with this criterion, because \(\rank Q\leq \dim\ker\rho^{\dagger}\) leads to many eigenvalues that are zero and may become slightly negative due to numerical inaccuracies. If it is feasible we therefore recommend to consider the restriction of \(Q\) onto \(\ker\rho^{\dagger}\) instead and check that it is positive semidefinite to get rid of many of the zero eigenvalues. This has the downside that if  \(-\nabla\calF(\rho^{\dagger})=\lambda I-Q\) also only holds approximately one may accidentally ignore some slightly negative eigenvalues by just considering this restriction.
If the criterion fails one could always restart the algorithm with a different step size or starting point. A more efficient alternative is to instead continue with a different method like projected gradient descent and use the obtained result as a good starting point.

\Cref{the:fix_opt_cond_general} also provides us with a new method to numerically construct spurious fixed points if the rank of the true solution is known. 
\begin{corollary}\label{cor:gen_spurious}
    If the true solution \(\rho^{\dagger}\) to problem \ref{prob:general}  has rank greater than one and iteration \ref{al:iter_RI_general} is initialized with a matrix \(\rho_{0}\in\BopIpsdtr_{c}\) such that \(1\leq\rank(\rho_{0})<\rank(\rho^{\dagger})\), then the iteration converges to a spurious fixed point.
\end{corollary}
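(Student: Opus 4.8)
The key idea is that the rank can only stay the same or decrease along the iteration, so an iterate of low rank can never reach the higher-rank solution, and since the iteration provably converges to a fixed point, that limit must be a fixed point that is not the solution. Let me spell out the steps.

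First I would establish the rank-monotonicity of the iteration. Writing $A_k := I - \epsilon\nabla\calF(\rho_k)$, we have $\rho_{k+1} = c\, A_k \rho_k A_k / \tr(A_k\rho_k A_k)$, so up to the positive normalization constant $\rho_{k+1}$ is congruent to $\rho_k$ via $A_k$. Hence $\ran(\rho_{k+1}) = A_k \ran(\rho_k)$ (using that $\rho_k$ is positive semi-definite, so $\ran(A_k\rho_k A_k) = A_k\ran(\rho_k)$), which immediately gives $\rank(\rho_{k+1}) \leq \rank(\rho_k)$, with equality exactly when $A_k$ is injective on $\ran(\rho_k)$. By induction, $\rank(\rho_k) \leq \rank(\rho_0) < \rank(\rho^{\dagger})$ for all $k$. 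One subtlety: one should note the normalization denominator $\tr(A_k\rho_k A_k)$ is strictly positive as long as $A_k\rho_k \neq 0$, which holds since $\rho_k \in \BopIpsdtr_c$ has $\tr(\rho_k) = c > 0$ and $A_k$ is close to the identity for the relevant step sizes — this keeps the iteration well-defined, and I would remark that the step-size rule of \Cref{the:fix_opt_cond_general} keeps us in this regime.

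Next I would invoke \Cref{the:fix_opt_cond_general}: since $\epsilon$ is chosen at each step to maximize the reduction of $\calF$, the iteration converges to some fixed point $\rho^{\text{fix}}$. The limit $\rho^{\text{fix}}$ lies in $\BopIpsdtr_c$ (the set is closed), and by lower semicontinuity of rank, $\rank(\rho^{\text{fix}}) \leq \liminf_k \rank(\rho_k) \leq \rank(\rho_0) < \rank(\rho^{\dagger})$. Now suppose for contradiction that $\rho^{\text{fix}}$ were a solution to problem (\ref{prob:general}). Since $\calF$ is convex and continuously differentiable on $\BopIpsdtr_c$ and the constraint set is convex, the solution set is characterized by the first-order condition $-\nabla\calF(\rho) \in \partial\chi_c(\rho)$; in particular the set of minimizers is convex, but more to the point, any two minimizers have the same gradient behavior. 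The cleanest route is: because the hypothesis assumes $\rho^{\dagger}$ is \emph{the} true solution (I read the corollary as presuming the relevant minimizer has rank $> 1$, and I would either assume uniqueness or argue directly), and $\rho^{\text{fix}}$ has strictly smaller rank than $\rho^{\dagger}$, we get $\rho^{\text{fix}} \neq \rho^{\dagger}$. If the minimizer is unique this is already the contradiction. If not, I would instead argue that $\rho^{\text{fix}}$ being a minimizer of strictly smaller rank than the assumed true solution still lets one conclude it is "spurious" in the sense of the paper (it is not $\rho^{\dagger}$), which is the stated conclusion; alternatively, strict convexity along the segment $[\rho^{\text{fix}}, \rho^{\dagger}]$ from the Cauchy–Schwarz equality analysis in the preceding lemma could be leveraged to rule out two distinct minimizers when $T$ is injective.

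The main obstacle is the uniqueness-of-minimizer question: the corollary says the limit is "a spurious fixed point," which presupposes $\rho^{\dagger}$ is well-defined as "the" true solution. I would handle this by stating up front that we take the true solution to be unique (true when $\calF$ is strictly convex in the relevant directions, e.g. the negative log-likelihood with injective $T$), so that $\rank(\rho^{\text{fix}}) < \rank(\rho^{\dagger})$ forces $\rho^{\text{fix}} \neq \rho^{\dagger}$; hence $\rho^{\text{fix}}$ is a fixed point that is not the solution, i.e. a spurious fixed point. A secondary, more technical point worth a line is ensuring the maximal-reduction step-size rule never forces $\epsilon$ to the single excluded value $\mu$ from the fixed-point lemma and never makes $A_k$ singular on $\ran(\rho_k)$ in a way that would collapse the rank to $0$ — but since $\rho_0 \neq 0$ and the normalization keeps $\tr(\rho_k) = c$, the rank stays at least $1$ throughout, matching the hypothesis $\rank(\rho_0) \geq 1$.
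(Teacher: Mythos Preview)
Your proposal is correct and follows essentially the same approach as the paper: invoke \Cref{the:fix_opt_cond_general} for convergence, use that the update $\rho_{k+1}\propto A_k\rho_kA_k$ cannot increase rank, and conclude $\rank(\rho^{\text{fix}})\leq\rank(\rho_0)<\rank(\rho^{\dagger})$ so $\rho^{\text{fix}}\neq\rho^{\dagger}$. The paper's proof is just these three lines; your extra care about lower semicontinuity of rank at the limit and about uniqueness of the minimizer are legitimate technical points that the paper simply leaves implicit (it reads ``the true solution'' as uniquely determined and passes the rank bound to the limit without comment), so you are not diverging from the intended argument but rather filling in details it omits.
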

\begin{proof}
    From \Cref{the:fix_opt_cond_general} we get that the iteration converges as long as \(\rho_{0}\in\BopIpsdtr_{c}\). So we can define the limit of the iteration to be \(\rho_{\text{fix}}\). Now because the algorithm only relies on scalar divisions and matrix multiplications, we have \(\rank(\rho_{k+1})\leq\rank(\rho_{k})\). By our assumption we then get \(\rank(\rho_{\text{fix}})\leq\rank(\rho_{0})<\rank(\rho^{\dagger})\) which implies \(\rho_{\text{fix}}\neq\rho^{\dagger}\).
\end{proof}
We will use this fact in \Cref{sec:experiments} to obtain spurious fixed points for larger operators.

\subsection{Equivalence to factored gradient descent}
So far we have analyzed the algorithm in the context of convex optimization. Now we will show, that it can also be viewed as a gradient descent algorithm for the non-convex problem

\begin{align*}
    \underset{X\in\setC^{N\times N},\|X\|_{F}=\sqrt{c}}{\argmin}\calF(XX^{*}).
\end{align*}
A gradient descent method for this problem was analyzed in \cite{Kyrillidis:17} and is given by
\begin{align}\label{al:iter_grad_descent}
    X_{k+\frac{1}{2}}=X_{k}-\epsilon \nabla\calF(X_{k}X_{k}^{*})X_{k}
    ,\quad X_{k+1}=\frac{\sqrt{c}X_{k+\frac{1}{2}}}{\|X_{k+\frac{1}{2}}\|_{F}}.
\end{align}
This turns out to be exactly the same iteration as in the Gradient Multiplication algorithm and therefore also in the maximum likelihood algorithm. For the standard maximum likelihood algorithm, this has been observed already in \cite{Shang:17} and in the context of coherence retrieval this was proposed in \cite{Zhang:13}. There the possibility of spurious fixed points was discussed as well. We generalize these results.
\begin{theorem}\label{the:equiv}
    If the algorithms (\ref{al:iter_RI_general}) and (\ref{al:iter_grad_descent}) are initialized such that \(\rho_{0}=X_{0}X_{0}^{*}\) and if the same step size is chosen in each step for both algorithms, then \(\rho_{k}=X_{k}X_{k}^{*}\) holds for all iterates.
\end{theorem}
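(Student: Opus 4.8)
The plan is to argue by induction on the iteration index \(k\), with the base case \(\rho_0 = X_0 X_0^*\) holding by hypothesis. For the inductive step, suppose \(\rho_k = X_k X_k^*\). The key observation is that the half-step of the factored gradient descent (\ref{al:iter_grad_descent}) factors on the left: since \(\nabla\calF(X_k X_k^*) = \nabla\calF(\rho_k)\), one has
\[
X_{k+\frac12} = X_k - \epsilon\nabla\calF(\rho_k)X_k = (I - \epsilon\nabla\calF(\rho_k))X_k .
\]
Here I would spell out (or recall, since \(\calF\) is a real functional on the space of hermitian matrices) that \(\nabla\calF(\rho_k)\) is itself hermitian, so that \((I-\epsilon\nabla\calF(\rho_k))^* = I-\epsilon\nabla\calF(\rho_k)\).

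Next I would compute the unnormalized outer product:
\[
X_{k+\frac12}X_{k+\frac12}^* = (I-\epsilon\nabla\calF(\rho_k))\,X_k X_k^*\,(I-\epsilon\nabla\calF(\rho_k))^* = (I-\epsilon\nabla\calF(\rho_k))\,\rho_k\,(I-\epsilon\nabla\calF(\rho_k)),
\]
using the inductive hypothesis in the middle and hermiticity on the right. The only remaining point is to match the normalizations: because \(\|Y\|_F^2 = \tr(YY^*)\) for any matrix \(Y\), we get
\[
\|X_{k+\frac12}\|_F^2 = \tr\!\big((I-\epsilon\nabla\calF(\rho_k))\rho_k(I-\epsilon\nabla\calF(\rho_k))\big),
\]
so that
\[
X_{k+1}X_{k+1}^* = \frac{c\,X_{k+\frac12}X_{k+\frac12}^*}{\|X_{k+\frac12}\|_F^2} = c\,\frac{(I-\epsilon\nabla\calF(\rho_k))\rho_k(I-\epsilon\nabla\calF(\rho_k))}{\tr\big((I-\epsilon\nabla\calF(\rho_k))\rho_k(I-\epsilon\nabla\calF(\rho_k))\big)} = \rho_{k+1},
\]
where the last equality is precisely the definition (\ref{al:iter_RI_general}), using that the same \(\epsilon\) is chosen in both iterations at step \(k\). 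This closes the induction.

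I do not expect a genuine obstacle here; the statement is essentially a bookkeeping identity. The two things to be careful about are (i) that \(\nabla\calF\) takes hermitian values, which is what lets the left-factorization survive the \((\cdot)^*\), and (ii) that the Frobenius-norm normalization in (\ref{al:iter_grad_descent}) and the trace normalization in (\ref{al:iter_RI_general}) are literally the same via \(\|Y\|_F^2=\tr(YY^*)\). One should also tacitly assume the denominators are nonzero, i.e. that \(I-\epsilon\nabla\calF(\rho_k)\) does not annihilate \(\rho_k^{1/2}\), which is exactly the well-definedness hypothesis already built into both iterations.
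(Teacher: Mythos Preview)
Your proof is correct and follows exactly the same inductive route as the paper: assume \(\rho_k = X_k X_k^*\), rewrite \(X_{k+\frac12}=(I-\epsilon\nabla\calF(\rho_k))X_k\), and match the Frobenius-norm normalization with the trace normalization to obtain \(X_{k+1}X_{k+1}^*=\rho_{k+1}\). Your added remarks on the hermiticity of \(\nabla\calF(\rho_k)\) and the well-definedness of the denominators are helpful clarifications that the paper leaves implicit.
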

\begin{proof}
    We prove this statement by induction. The base case is given by assumption. For the induction step we assume that \(\rho_{k}=X_{k}X_{k}^{*}\) and then calculate
    \begin{align*}
    \rho_{k+1}&=c\frac{(I-\epsilon\nabla\calF(\rho_{k}))\rho_{k}(I-\epsilon\nabla\calF(\rho_{k}))}{\tr((I-\epsilon\nabla\calF(\rho_{k}))\rho_{k}(I-\epsilon\nabla\calF(\rho_{k})))}=\frac{(I-\epsilon\nabla\calF(X_{k}X_{k}^{*}))X_{k}X_{k}^{*}(I-\epsilon\nabla\calF(X_{k}X_{k}^{*}))}{\tr((I-\epsilon\nabla\calF(X_{k}X_{k}^{*}))\rho_{k}(I-\epsilon\nabla\calF(X_{k}X_{k}^{*})))}\\
    &=c\frac{\left[(I-\epsilon\nabla\calF(X_{k}X_{k}^{*}))X_{k}\right]\left[(I-\epsilon\nabla\calF(X_{k}X_{k}^{*}))X_{k}\right]^{*}}{\tr(\left[(I-\epsilon\nabla\calF(X_{k}X_{k}^{*}))X_{k}\right]\left[(I-\epsilon\nabla\calF(X_{k}X_{k}^{*}))X_{k}\right]^{*})}\\
    &=\frac{\sqrt{c}(I-\epsilon\nabla\calF(X_{k}X_{k}^{*}))X_{k}}{\|(I-\epsilon\nabla\calF(X_{k}X_{k}^{*}))X_{k}\|_{F}}\left(\frac{\sqrt{c}(I-\epsilon\nabla\calF(X_{k}X_{k}^{*}))X_{k}}{\|(I-\epsilon\nabla\calF(X_{k}X_{k}^{*}))X_{k}\|_{F}}\right)^{*}=X_{k+1}X_{k+1}^{*}.
\end{align*}
\end{proof}
This means that instead of the Gradient Multiplication algorithm one could always do factorized gradient descent. This gradient descent algorithm has the additional advantage, that it can be performed on \(X\in\setC^{N\times r}\) instead to produce density matrices with rank bounded by \(r\).

This factorization procedure also applies to the original maximum likelihood algorithm (\ref{al:iter_max_lik_std}) without step size which can then be performed by computing \cite{Shang:17}
\begin{align}\label{al:iter_max_lik_factorized}
    X_{k+1}=\frac{R(X_{k}X_{k}^{*})X_{k}}{\|R(X_{k}X_{k}^{*})X_{k}\|_{F}}\quad\text{with }R(X_{k}X_{k}^{*}):=T^{*}\left(\frac{\yobs}{T(X_{k}X_{k}^{*})}\right).
\end{align}
Because \(R(X_{k}X_{k}^{*})X_{k}\) is half the negative gradient of \(-\ln\calL(\yobs,T(XX^{*}))\), we can see that the original algorithm works by replacing an iterate with its scaled negative gradient. This provides us with a geometric interpretation of the algorithm. With this parametrization, all possible solutions lie on a unit ball. Due to the monotonicity of the negative log-likelihood function, the negative gradients all point outwards as depicted schematically in \Cref{fig:geo_max_lik}. At each vector corresponding to the true solution, they point exactly in the same direction as this vector and otherwise they get closer and closer to pointing in its direction. One can therefore hope to get to the true solution by replacing each vector with its negative gradient. The appearance of cycles as described in \cite{Rehacek:07} is than caused by vectors which are each others negative gradients.
\begin{figure}
    \centering
    \includegraphics[scale=0.4]{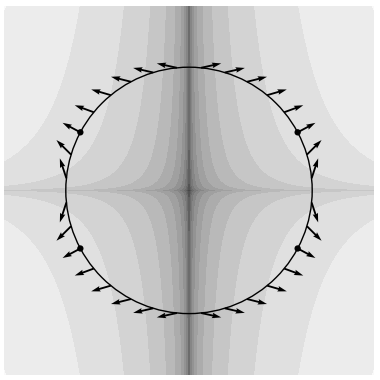}
    \caption{Schematic illustration of the geometry behind the (factorized) maximum likelihood iteration (\ref{al:iter_max_lik_factorized}). The values of the function \(-a^{2}\ln(x^{2})-b^{2}\ln(y^{2})\) for \(a\approx 0.88,b\approx 0.47\) are indicated by gray values, and the directions of the negative gradients for points on the unit circle are represented by arrows. The minima \((\pm a,\pm b)\) on the unit circle are marked with black dots.}
    \label{fig:geo_max_lik}
\end{figure}

\section{Numerical experiments}\label{sec:experiments}
To numerically validate our findings from the previous section, we apply them to the example of homodyne tomography. The code to reproduce all of them is provided in \cite{Oberender:25data}. In this setting the (binned) forward operator is given by 
\begin{align*}
   (T\rho)(\theta,k)=\int_{x_{k}}^{x_{k+1}}\sum_{m,n\in\setN}\rho_{m,n}e^{i(n-m)\theta}h_{m}(x)h_{n}(x)\dd x 
\end{align*}
with \(h_{m}\) being the Hermite function of order \(m\) \cite{Richter:00}. We consider the problem of reconstructing a \(10\times 10\) matrix from a measurement with 15 angles and 50 bins in the \(x\)-direction spaced equally on the interval \([-7,7]\). For data \(\yobs\) we consider the optimization problem
\begin{align}\label{prob:special}
    \underset{\rho\in\BopIpsdtr_{1}}{\argmin}\,\calS_{\yobs}(T\rho)
\end{align}
for \(\calS_{\yobs}=-\ln\calL(\yobs,\cdot)\) corresponding to maximum likelihood iteration and for \(\calS_{\yobs}=\frac{1}{2}\|\yobs-\cdot\|_{2}^{2}\) which models a standard least squares error. This is done to demonstrate that the Gradient Multiplication algorithm can be applied outside of the maximum likelihood context.

We generated a data set of 1000 random density matrices such that we have 100 for each rank. The exact data is then simulated by applying the forward operator \(T\). We also added Poisson noise by computing \(y^{\text{noisy}}\sim\pois(500\odot T\rho) \) and normalizing for each angle \(\theta\). This leads to a relative noise level of around 22 percent with a slight variation coming from the normalization.

We did the reconstruction for both different functionals, with and without noise with the Gradient multiplication algorithm and with factorized gradient descent until convergence or until a maximum of 20,000 iterations was reached. For exact data we compare the result to the exact solution in the trace norm. To do a similar comparison for the noisy version we compute a solution to problem (\ref{prob:special}) using Cvxpy \cite{Diamond:16} as presented in \cite{Strandberg:22}. 

\begin{figure}
\begin{subfigure}[b]{0.45\textwidth}
    \centering
    \includegraphics[scale=0.33]{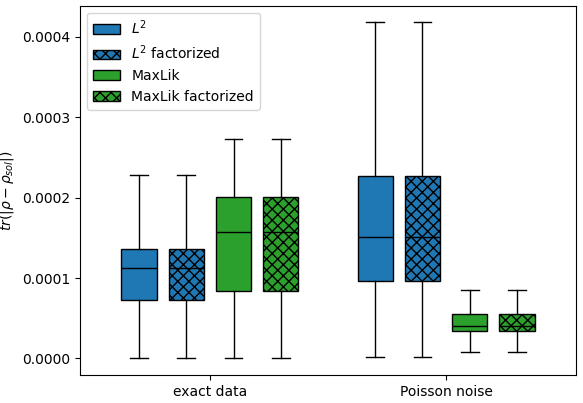}
    \caption{}
\end{subfigure}
\hspace{0.5cm}
\begin{subfigure}[b]{0.45\textwidth}
    \centering
    \includegraphics[scale=0.33]{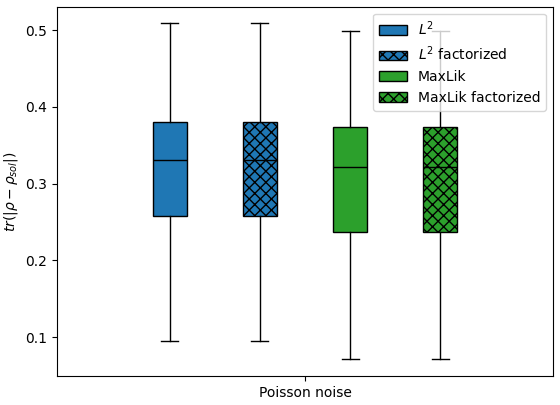}
    \caption{}
\end{subfigure}
\caption{(a) Difference in trace norm between the exact solution to problem (\ref{prob:special}) and the one computed with the respective method for data with and without noise. (b) Difference in trace norm between the solutions obtained with the algorithms applied to data with Poisson noise and the exact solution.}
\label{fig:comp_norms}
\end{figure}

The results are depicted in \Cref{fig:comp_norms}a. For the exact data both versions produce very similar results, with the \(L^{2}\)-norm working slightly better. For noisy data the result for the maximum likelihood iteration is about one order of magnitude smaller than the one with the \(L^{2}\)-norm. As expected, there is no difference between the results of the maximum likelihood iteration and the factorized gradient descent.

For completeness we also compared the solutions for Poisson noise to the exact solution (see \Cref{fig:comp_norms}b). We can see that the results for both versions are again very similar and even though the maximum likelihood approach is based on a more accurate noise model, the results are only slightly more accurate on average. This is evidence for our hypothesis, that the main effectiveness of the maximum likelihood approach so far is not caused by modeling the likelihood in a more accurate way but instead by incorporating the restriction to the set of density matrices into the reconstruction process.  

\begin{figure}
\begin{subfigure}[b]{0.45\textwidth}
    \centering
    \includegraphics[scale=0.33]{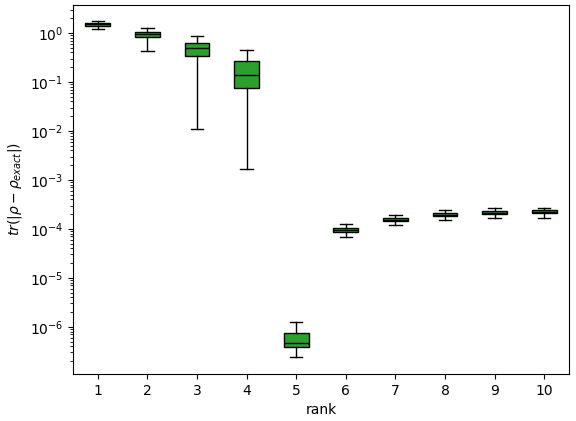}
    \caption{}
\end{subfigure}
\hspace{0.5cm}
\begin{subfigure}[b]{0.45\textwidth}
    \centering
    \includegraphics[scale=0.33]{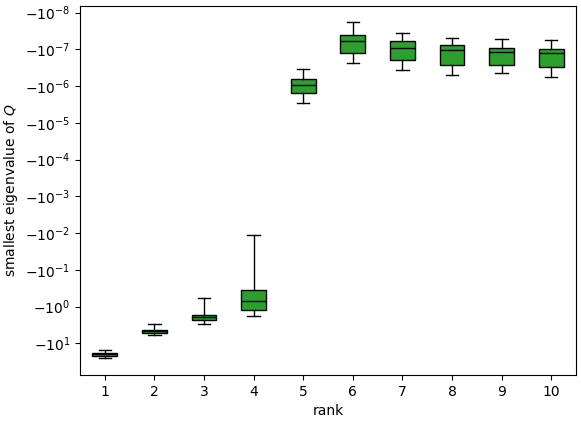}
    \caption{}
\end{subfigure}
\caption{(a) Difference in trace norm between the exact solution to problem (\ref{prob:special}) with rank 5 and the one computed with the maximum likelihood iteration depicted for different ranks of the starting points. (b) Smallest eigenvalues of \(Q=\nabla\calF(\rho)-\tr(\nabla\calF(\rho)\rho)I\), which should be positive semidefinite for the true solution by \Cref{the:fix_opt_cond_general}, for different starting ranks.}
\label{fig:comp_counter}
\end{figure}
We also used the result from \Cref{cor:gen_spurious} and the factorized gradient descent method to generate spurious fixed points. For this we considered as the set of true solutions the random density matrices with rank 5 and then did the reconstruction starting with a matrix with rank \(r\) for \(r\in\{1,...,10\}\). The results in \Cref{fig:comp_counter}a show, that the true solution is only (approximately) reached if \(r\geq 5\) as expected. For \(r<5\) we have reached spurious fixed points. We also tested the criterion from \Cref{the:fix_opt_cond_general}. The smallest eigenvalues of the matrices \(Q=\nabla\calF(\rho)-\tr(\nabla\calF(\rho)\rho)I\) depicted in \Cref{fig:comp_counter}b show again, that the fixed points reached for \(r<5\) do not correspond to the true solution. For \(r\geq 5\) the eigenvalues are still negative due to numerical inaccuracies but much closer to zero.

The numerical experiments confirm our results. Most importantly they show that spurious fixed points also exists in practically relevant setups and that maximum likelihood iteration and factorized gradient descent produce the same results.

\section{Conclusions}\label{sec:conclusions}
We analytically and numerically showed the existence of spurious fixed points in  the iterative maximum likelihood algorithm and found a criterion to distinguish them from actual solutions. We furthermore generalized the algorithm and showed that it is equivalent to gradient descent on a factorized matrix. This was also confirmed numerically, and we showed that other data fidelity functionals such as the \(L^{2}\)-norm lead to similar results. 

Apart from providing a detection method for spurious fixed points to experimentalists, these findings could be used to develop faster variants of the maximum likelihood algorithm and incorporate rank constraints.

\section*{Data Availability Statement}
The code associated with this article is available in `GRO.data', under the reference

\noindent https://doi.org/10.25625/GPKB4Q.
\bibliographystyle{abbrv}
\bibliography{references}

\end{document}